\documentclass[journal]{IEEEtran}

\ifCLASSINFOpdf
\else
   \usepackage[dvips]{graphicx}
\fi
\usepackage{url}

\usepackage{cite}

\usepackage{nicefrac}
\usepackage{amsmath}
\usepackage{amssymb}
\usepackage{graphicx}
\usepackage{booktabs}
\usepackage{hyperref}
\usepackage{soul}
\usepackage{color}

\usepackage{algorithm}
\usepackage{algpseudocode}

\usepackage{booktabs}
\usepackage{makecell}
\usepackage{siunitx}

\newcommand{\pmstd}[1]{{\scriptscriptstyle \pm\, #1}}

\newtheorem{theorem}{Theorem}
\newtheorem{proof}{Proof}

\newcommand{\diff}{\mathrm{d}}
\newcommand{\R}{\mathbb{R}}

\begin{document}

\title{An integration-free approach for particle flow filtering}

\author{Domonkos Csuzdi, Tamás Bécsi, and Olivér Törő
\thanks{The project supported by the Doctoral Excellence Fellowship Programme (DCEP) is funded by the National
Research Development and Innovation Fund of the Ministry of Culture and Innovation and the Budapest
University of Technology and Economics.\\This work was funded by the National Research, Development and Innovation Office (NKFIH) under NKKP Grant Agreement No. ADVANCED 152880}
\thanks{Domonkos Csuzdi, Tamás Bécsi, and Olivér Törő are with the Department of Control for Transportation and Vehicle Systems, Faculty of Transportation Engineering and Vehicle Engineering, Budapest University of Technology and Economics, Műegyetem rkp.~3, Budapest, H-1111, Hungary. (e-mail: domonkos.csuzdi@edu.bme.hu, becsi.tamas@kjk.bme.hu, toro.oliver@kjk.bme.hu).}}

\markboth{Journal of \LaTeX\ Class Files, Vol. 14, No. 8, August 2015}
{Shell \MakeLowercase{\textit{et al.}}: Bare Demo of IEEEtran.cls for IEEE Journals}
\maketitle

\begin{abstract}
Log-homotopy particle flow filters realize nonlinear Bayesian estimation by continuously migrating samples from the prior to the posterior distribution. This transport is governed by a pseudo-time ordinary differential equation (ODE). A major practical challenge of these filters is the need for numerical integration, which suffers from high computational cost and susceptibility to stiffness. This paper develops an exact, integration-free closed-form solution for the exact Daum--Huang deterministic particle flow under vector linear Gaussian measurements. By transforming the ODE into a specific eigenspace, we derive closed-form algebraic expressions for both the homogeneous state transition matrix and the inhomogeneous forcing term. We prove that this analytic solution is equivalent to the exact Kalman measurement update.
We embed this closed-form evaluation within an $N$-step piecewise method for nonlinear measurement models. We further propose a constant contraction rate substep schedule that equalizes the per-step contraction along the eigendirection of $D$ associated with the largest eigenvalue $\alpha_{\max}$. The result is a stiffness-mitigating, integration-free particle update for highly nonlinear measurement models. On a bearings-only tracking benchmark, it achieves the lowest error among the compared filters, at a per-update cost comparable to deterministic particle flow baselines and substantially lower than stochastic flows.
\end{abstract}

\begin{IEEEkeywords}
Bayesian update, log-homotopy, particle flow, state estimation
\end{IEEEkeywords}

\IEEEpeerreviewmaketitle

\section{Introduction}

\IEEEPARstart{P}{article} filters are widely used for nonlinear Bayesian state estimation~\cite{chen2003bayesian,Wang2026, PATWARDHAN2012933, Gordon1993Novel}, but they inherently suffer from weight degeneracy in high-dimensional or sharp likelihood scenarios~\cite{bickel2008sharp}. To address this, particle flow filters were introduced to replace traditional importance resampling with a continuous transport of probability mass~\cite{Daum2007nonlinear,daum2015proof,dai2021stability}. 

Log-homotopy particle flow filters realize the Bayesian update by moving samples from the prior density $g(x)$ to the posterior $p(x)$ continuously over a pseudo-time interval $\lambda \in [0, 1]$. A path of intermediate densities $p(x, \lambda)$ is constructed using the linear log-homotopy
\begin{equation} \label{eq:homotopy}
    \log p(x, \lambda) = \log g(x) + \lambda \log \ell(x) - \log \kappa(\lambda) \, ,
\end{equation}
where $\ell(x)$ is the likelihood and $\kappa(\lambda)$ is the instantaneous normalization factor. This continuous deformation induces a differential equation that governs the flow of the particles. 

Various flow formulations have been derived so far, both deterministic and stochastic variants~\cite{dai2021new,crouse2020consideration,mori2022note}. Among these, the exact Daum--Huang (EDH) filter~\cite{daum2010exact, ding2012implementation, choi2011discussion} is highly attractive due to its deterministic nature. However, a major practical limitation of particle flow filters is that the resulting ordinary differential equation (ODE) is typically solved via numerical integration. This introduces discretization errors, high computational cost, and susceptibility to stiffness~\cite{dai2022stiffness}.

Stiffness in the particle flow ODEs poses a long-recognized challenge for accurate numerical evaluation~\cite{daum2016plethora}.
Because there is no universally accepted mathematical definition of stiffness, in practice it is typically characterized heuristically through the ill-conditioning of the flow Jacobian~\cite{daum2014seven}.
Several strategies exist to mitigate stiffness. One direction adapts the discretization scheme to the flow. Mori et al.~\cite{mori2016adaptive} proposed an adaptive Euler step that limits per-particle displacement, later generalized by Dai and Daum~\cite{dai2022stiffness}, who replaced the straight-line log-homotopy with an optimized path derived from a boundary-value problem that minimizes the Hessian condition number.
A second direction keeps an equidistant integration grid but replaces explicit integrators with implicit or stiff-aware schemes~\cite{crouse2020consideration, khan2015improvements}.
A third direction treats the diffusion matrix $Q$ as a design parameter. Initial proposals included minimum stiffness as one of the optimization criteria for $Q$~\cite{daum2016seven}. Subsequent studies derived optimal choices for $Q$ to reduce the Jacobian condition number while satisfying variance and stability constraints~\cite{dai2022role, dai2022design,zhang2025importance, daum2024bullet}.
The present work belongs to a fourth, conceptually distinct category. It sidesteps the stiffness problem entirely by eliminating numerical integration.
In prior works, Crouse~\cite{crouse2021particle} obtained an analytic solution for the geodesic flow and reduced the Gromov flow to non-stiff stochastic integrals, while Törő and Bécsi~\cite{toro2023analytic} proved that the EDH flow is driven by a commutative system and proposed an analytic solution for the scalar measurement case. 

The contributions of this paper are threefold. First, we derive an exact, integration-free evaluation of the EDH particle flow update for the general vector measurement case, transforming the ODE solution into algebraic operations involving an eigendecomposition. Second, we embed this closed-form evaluation within an $N$-step piecewise method for nonlinear filtering, and propose a constant contraction rate (ccr) substep schedule that equalizes the per-step contraction along the eigendirection of $D$ associated with the largest eigenvalue $\alpha_{\max}$, eliminating the stiffness bottleneck of a uniform grid. Third, we prove that the closed-form update is equivalent to the exact Kalman measurement update.
Our implementation and code for reproducing the results are available at \url{https://github.com/DomonkosCs/NAEDH}.

\section{Problem formulation: the EDH flow}

For linear Gaussian measurement models, the exact deterministic flow of a particle $x$ in the EDH framework is governed by a first-order, non-autonomous linear ODE
\begin{equation} \label{eq:flow_ode}
    \frac{d x(\lambda)}{d\lambda} = A(\lambda)x(\lambda) + b(\lambda) \, , \qquad \lambda \in [0,1] \, .
\end{equation}
Here, the homogeneous flow matrix $A(\lambda) \in \mathbb{R}^{n_x \times n_x}$ and the inhomogeneous forcing vector $b(\lambda) \in \mathbb{R}^{n_x}$ are given by~\cite{daum2010exact}:
\begin{align}
    A(\lambda) &= -\tfrac{1}{2} P H^\top \bigl(\lambda H P H^\top + R\bigr)^{-1} H, \label{eq:A_matrix} \\
    b(\lambda) &= \bigl(I + 2\lambda A(\lambda)\bigr) \Bigl[ \bigl(I + \lambda A(\lambda)\bigr) P H^\top R^{-1} z + A(\lambda)\bar{x} \Bigr], \label{eq:b_vector} \nonumber
\end{align}
where $P \in \mathbb{R}^{n_x \times n_x}$ is the prior covariance, $H \in \mathbb{R}^{n_z \times n_x}$ is the measurement matrix, $R \in \mathbb{R}^{n_z \times n_z}$ is the measurement noise covariance, $z \in \mathbb{R}^{n_z}$ is the observation, and $\bar{x} \in \mathbb{R}^{n_x}$ is the prior mean. We assume throughout that $P$ and $R$ are symmetric positive definite and that $H$ has full row rank. The dimension of the identity matrix $I$ follows from context.

Because $A(\lambda)$ commutes with its own integral, the formal exact solution to the ODE~\eqref{eq:flow_ode} takes the form
\begin{equation} \label{eq:formal_solution}
    x(\lambda) = \Phi(\lambda, 0)x(0) + \int_0^\lambda \Phi(\lambda, \mu)b(\mu) \, d\mu \, ,
\end{equation}
where $\Phi(\lambda, \mu) = \exp \bigl( \int_\mu^\lambda A(\tau) \, d\tau \bigr)$ is the state transition matrix.

Törő and Bécsi~\cite{toro2023analytic} previously derived the solution for the scalar measurement case. Here, we derive the solution for the general vector measurement case.

\section{Integration-free exact flow}

\subsection{The state transition matrix}

To decouple the pseudo-time parameter $\lambda$ from the matrix inverse in $A(\lambda)$, we first apply a whitening transformation to the measurement space using $R^{\nicefrac{-1}{2}}$, the inverse of the unique symmetric positive definite square root of the measurement noise covariance $R$.
By factoring $R^{\nicefrac{1}{2}}$ out of both sides of the inner inverse term, we can rewrite $(\lambda H P H^\top \!\!+\! R )^{-1}$ as
\begin{align} \label{eq:inverse_factorization}
    R^{\nicefrac{-1}{2}} ( \lambda R^{\nicefrac{-1}{2}} H P H^\top \!R^{\nicefrac{-1}{2}} + I )^{-1} \!R^{\nicefrac{-1}{2}} \, .
\end{align}

By defining the symmetric positive definite measurement-space matrix $D=R^{\nicefrac{-1}{2}} H P H^\top R^{\nicefrac{-1}{2}} \in \mathbb{R}^{n_z \times n_z}$
we can write
\begin{equation} \label{eq:A_matrix_D}
    A(\lambda) = -\tfrac{1}{2} \bigl(P H^\top R^{\nicefrac{-1}{2}}\bigr) \bigl(\lambda D + I\bigr)^{-1} \bigl(R^{\nicefrac{-1}{2}} H\bigr) \, .
\end{equation}

To evaluate the state transition matrix $\Phi(\lambda, 0) = \exp( \int_0^\lambda A(\mu) d\mu )$, we integrate $A(\mu)$.
Because $(I+\lambda D)^{-1}$ exists for all $\lambda \in [0,1]$, we can recognize the integral representation of the logarithm of a matrix~\cite{wouk1965integral}, that is
\begin{equation}
    \log (\lambda D + I) = \int_0^{\lambda} D (\mu D + I)^{-1} \diff \mu \, ,
\end{equation}
which yields
\begin{equation} \label{eq:matrix_log_integral}
    \int_0^\lambda \!A(\mu) \, d\mu = -\dfrac{1}{2} P H^\top R^{\nicefrac{-1}{2}} \bigl[ D^{-1} \log(I + \lambda D) \bigr] R^{\nicefrac{-1}{2}} H.
\end{equation}
With the introduction of $K = -\nicefrac{1}{2} P H^\top R^{\nicefrac{-1}{2}}$ and $L(\lambda) = D^{-1} \log(I + \lambda D) R^{\nicefrac{-1}{2}} H$, we can factor \eqref{eq:matrix_log_integral} as $KL$. The state transition matrix now reads as $\Phi(\lambda,0) = \exp (KL)$.
To evaluate $\Phi(\lambda,0)$ we will use the matrix exponential identity
\begin{equation}
    \exp(KL) = I + K \left[ (\exp(LK) - I)(LK)^{-1} \right] L \, ,
    \label{eq:low_rank_exp_identity}
\end{equation}
since $K \in \R^{n_x\times n_z}$ and $L \in \R^{n_z\times n_x}$.
The term $LK$ simplifies:
\begin{align}
    L(\lambda)K
      &= D^{-1} \log(I + \lambda D)\, R^{\nicefrac{-1}{2}} H
         \left( -\tfrac{1}{2} P H^\top R^{\nicefrac{-1}{2}} \right) \nonumber \\ 
      &= -\tfrac{1}{2}\, D^{-1} \!\log(I \!+\! \lambda D)\!\,
          \underbrace{R^{\nicefrac{-1}{2}} H P H^\top R^{\nicefrac{-1}{2}}}_{D} \, .
\end{align}
Since $D$ commutes with any function of itself, in particular with $\log(I + \lambda D)$, we can regroup the factors:
\begin{equation} \label{eq:easyLK}
    L(\lambda)K = -\tfrac{1}{2}\, D^{-1} D\, \log(I + \lambda D)
                 = -\tfrac{1}{2}\, \log(I + \lambda D) \, .
\end{equation}
Exponentiating the product \eqref{eq:easyLK} cancels the logarithm:
\begin{equation}
    \exp\bigl(L(\lambda)K\bigr)
      = (I + \lambda D)^{\nicefrac{-1}{2}} \, .
\end{equation}

Substituting these results into the low-rank formula \eqref{eq:low_rank_exp_identity}, for $\Phi(\lambda,0)$ we obtain
\begin{multline}
I + K\! \left[-2 \bigl((I + \lambda D)^{\nicefrac{-1}{2}} \!\!- I\bigr)
               \bigl(\log(I + \lambda D)\bigr)^{-1}
         \right] \!L(\lambda) \, .
\end{multline}
Since $L$ contains $\log(I + \lambda D)$, the state transition matrix takes the algebraic form
\begin{align} \label{eq:Phi_cancellation}
    \Phi(\lambda, 0) &= I \!+\! P H^\top \!R^{\nicefrac{-1}{2}} \left[ (I \!+\! \lambda D)^{\nicefrac{-1}{2}} \!- I \right] D^{-1} R^{\nicefrac{-1}{2}} H \, .
\end{align}

Because $D$ is symmetric positive definite, it admits an eigendecomposition $D = V \Lambda V^\top$ where $V \in \mathbb{R}^{n_z \times n_z}$ contains the orthogonal eigenvectors and
$\Lambda = \mathrm{diag}(\alpha_1,\dots,\alpha_{n_z})$ collects the positive eigenvalues of $D$.

By writing $(I + \lambda D)^{\nicefrac{-1}{2}} = V (I + \lambda \Lambda)^{\nicefrac{-1}{2}} V^\top$ and defining the mappings $E = P H^\top R^{\nicefrac{-1}{2}} V \in \mathbb{R}^{n_x \times n_z}$ and $F^\top = V^\top R^{\nicefrac{-1}{2}} H \in \mathbb{R}^{n_z \times n_x}$ we find
\begin{multline} 
    \Phi(\lambda,0) = I + E \big[ (I + \lambda \Lambda)^{\nicefrac{-1}{2}} - I \big] \Lambda^{-1} F^\top \, .
\end{multline}
The product of the mappings reconstructs the diagonal eigenvalue matrix: $F^\top E = V^\top D V = \Lambda$. 

The middle factor can be written as a diagonal matrix $\Omega(\lambda)$ whose entries are simple scalar functions of the eigenvalues:
\begin{equation}
    \omega_{i}(\lambda)
        = \frac{(1 + \lambda \alpha_i)^{\nicefrac{-1}{2}} - 1}{\alpha_i} \, ,
    \qquad i = 1,\dots,n_z \, .
\end{equation}
With this notation we obtain the compact homogeneous state transition matrix
\begin{equation} \label{eq:Phi_final_compact}
    \Phi(\lambda,0)
      = I + E\, \Omega(\lambda)\, F^\top \, .
\end{equation}

\subsection{The inhomogeneous integral}

Using the mapping matrices $E$ and $F^\top$, the homogeneous flow matrix factors as $A(\mu) =-\tfrac{1}{2} E \Gamma(\mu) F^\top$, where $\Gamma(\mu)=(I+\mu \Lambda)^{-1}$.
We project the measurement $z \in \mathbb{R}^{n_z}$ into the $n_z$-dimensional eigenspace of $D$ as $\tilde{z} = V^\top R^{\nicefrac{-1}{2}} z \in \mathbb{R}^{n_z}$ and similarly the prior mean $\bar{x}$ as $\tilde{x} = V^\top R^{\nicefrac{-1}{2}} H \bar{x} \in \mathbb{R}^{n_z}$.

After we substitute these eigenspace projections into the forcing term $b(\mu)$, the matrix algebra simplifies. Because $A(\mu)E = -\tfrac{1}{2}E\Gamma(\mu)\Lambda$, the projection matrix $E$ factors out from the left side of the entire expression, reducing the vector equation to $n_z$ independent scalar equations: $b(\mu) = E \beta(\mu)$. The $i$-th component of $\beta(\mu)$ is a function of the eigenvalue $\alpha_i$:
\begin{equation}
    \beta_i(\mu) = \frac{ \tilde{z}_i(1 + \tfrac{1}{2} \mu \alpha_i) - \tfrac{1}{2} \tilde{x}_i }{ (1 + \mu \alpha_i)^2 } \, .
\end{equation}

To integrate this forcing term, we apply the mapped state transition matrix. Expanding \eqref{eq:Phi_final_compact} relative to pseudo-time $\mu$, the transition matrix satisfies $\Phi(\lambda, \mu) E = (I + E \Omega(\lambda, \mu) F^\top) E = E(I + \Omega(\lambda, \mu) \Lambda) \equiv E \Psi(\lambda, \mu)$, where $\Psi(\lambda, \mu)$ is a diagonal matrix with entries
\begin{equation}
    \Psi_{ii}(\lambda, \mu) = \left( \frac{1 + \mu \alpha_i}{1 + \lambda \alpha_i} \right)^{\!\nicefrac{1}{2}} .
\end{equation}
The integral $\int_0^\lambda \Phi(\lambda, \mu) b(\mu) d\mu$ thus decouples into independent scalar integrals for each eigendimension $i$:
\begin{equation} \label{eq:scalar_integral}
    c_i(\lambda) = \int_0^\lambda \Psi_{ii}(\lambda, \mu) \beta_i(\mu) \, d\mu \, .
\end{equation}
The integrand in \eqref{eq:scalar_integral} is an algebraic function that admits an exact, closed-form antiderivative. Evaluating this definite integral at the terminal pseudo-time $\lambda = 1$ bypasses all numerical quadrature:
\begin{equation} \label{eq:ci_final}
    c_i(1) = \frac{\alpha_i \tilde{z}_i + \tilde{x}_i \bigl(1 - \sqrt{1 + \alpha_i}\bigr) }{ \alpha_i (1 + \alpha_i) } \, .
\end{equation}

By defining the vector $c = [c_1(1), \dots, c_{n_z}(1)]^\top$, the final Bayesian update of a particle from the prior to the posterior is computed via a single evaluation:
\begin{equation}
    x(1) = \Phi(1, 0)x(0) + E c(1) \, .
\end{equation}
This completes the integration-free resolution of the exact deterministic flow.

\subsection{Equivalence to the Kalman posterior}

\begin{theorem}
The closed-form EDH flow solution at $\lambda=1$ reproduces the
Kalman filter mean and covariance for linear Gaussian measurement models.
\end{theorem}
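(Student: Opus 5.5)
The map $x(0)\mapsto x(1)=\Phi(1,0)x(0)+Ec(1)$ is affine. So if the particles are distributed as $\mathcal N(\bar x,P)$, the transported particles are Gaussian with mean $\Phi(1,0)\bar x+Ec(1)$ and covariance $\Phi(1,0)P\Phi(1,0)^\top$. The plan is to show separately that these equal the Kalman quantities
\[
x^+=\bar x+PH^\top S^{-1}(z-H\bar x),\qquad P^+=P-PH^\top S^{-1}HP,\qquad S=HPH^\top+R .
\]
Both comparisons will be done in the whitened eigenbasis. There, $PH^\top R^{\nicefrac{-1}{2}}V=E$, and $S^{-1}=R^{\nicefrac{-1}{2}}(I+D)^{-1}R^{\nicefrac{-1}{2}}=R^{\nicefrac{-1}{2}}V(I+\Lambda)^{-1}V^\top R^{\nicefrac{-1}{2}}$. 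The Kalman gain is therefore $E(I+\Lambda)^{-1}V^\top R^{\nicefrac{-1}{2}}$, and the Kalman formulas become
\[
x^+=\bar x+E(I+\Lambda)^{-1}(\tilde z-\tilde x),\qquad P^+=P-E(I+\Lambda)^{-1}E^\top .
\]
Here I use $V^\top R^{\nicefrac{-1}{2}}HP=E^\top$, which holds because $P$ and $R$ are symmetric.

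\textbf{Covariance.} Write $\Phi=I+E\Omega F^\top$ with $\Omega=\Omega(1)$. I expand
\[
\Phi P\Phi^\top = P+E\Omega F^\top P+PF\Omega E^\top+E\Omega F^\top PF\Omega E^\top .
\]
The key identities are $F^\top P=E^\top$ and $F^\top PF=F^\top E=\Lambda$. With them the expression collapses to
\[
P+E\bigl(2\Omega+\Omega\Lambda\Omega\bigr)E^\top .
\]
It remains to check the scalar identity $2\omega_i+\alpha_i\omega_i^2=-1/(1+\alpha_i)$, where $\omega_i=((1+\alpha_i)^{-1/2}-1)/\alpha_i$. Setting $s=(1+\alpha_i)^{-1/2}$, this reads
\[
\bigl(2(s-1)+(s-1)^2\bigr)/\alpha_i=(s^2-1)/\alpha_i=-1/(1+\alpha_i),
\]
which holds. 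Hence the flow covariance equals $P^+$.

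\textbf{Mean.} Substituting $x(0)=\bar x$ gives
\[
\Phi\bar x+Ec(1)=\bar x+E\bigl(\Omega\tilde x+c(1)\bigr),
\]
since $F^\top\bar x=\tilde x$. So it suffices to verify componentwise that $\omega_i\tilde x_i+c_i(1)=(\tilde z_i-\tilde x_i)/(1+\alpha_i)$, using the closed form \eqref{eq:ci_final}. The $\tilde z_i$ coefficient matches at once. For the $\tilde x_i$ coefficient, let $r=\sqrt{1+\alpha_i}$. Then
\[
\frac{1/r-1}{\alpha_i}+\frac{1-r}{\alpha_i r^2}=\frac{r-r^2+1-r}{\alpha_i r^2}=\frac{-\alpha_i}{\alpha_i r^2}=-\frac{1}{1+\alpha_i},
\]
as required.

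The main obstacle is the covariance step. One must justify that matching the affine image of the prior is the right notion of "reproducing" the covariance: the particle cloud's empirical moments transform exactly, and the Gaussian prior maps to a Gaussian. One must also apply the identities $F^\top P=E^\top$ and $F^\top E=\Lambda$ carefully, since $E$ is not square and only the products $F^\top E$ and $F^\top PF$ reduce to $\Lambda$. If the reviewer doubts the formula \eqref{eq:ci_final}, I would additionally re-derive $c_i(1)$. The substitution $u=1+\mu\alpha_i$ turns the integrand $(u/(1+\alpha_i))^{1/2}\beta_i$ into a sum of the powers $u^{-1/2}$ and $u^{-3/2}$, which integrate directly.
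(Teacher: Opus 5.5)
Your proof is correct and follows essentially the same route as the paper. The mean reduces to the per-eigendimension identity $\omega_i\tilde x_i + c_i(1) = (\tilde z_i - \tilde x_i)/(1+\alpha_i)$, the covariance reduces via $F^\top P F = F^\top E = \Lambda$ to $\Phi P\Phi^\top = P - E(I+\Lambda)^{-1}E^\top$, and both are matched to the Kalman formulas through $V(I+\Lambda)^{-1}V^\top = (I+D)^{-1}$; your expansion to $P + E(2\Omega + \Omega\Lambda\Omega)E^\top$ and the scalar check $2\omega_i + \alpha_i\omega_i^2 = -1/(1+\alpha_i)$ just fill in the step the paper abbreviates as ``algebra in the eigenbasis.''
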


\begin{proof}
Let the analytic update be written as
\begin{equation}
x_1 = \Phi x_0 + E c \, ,
\qquad
\Phi = I + E \Omega F^\top .
\end{equation}
Taking expectations and using $\mathbb{E}[x_0] = \bar{x}$ yields
\begin{equation}
\bar{x}_1 = \bar{x} + E\bigl(\Omega \tilde{x} + c\bigr) \, ,
\qquad
\tilde{x} = F^\top \bar{x} \, .
\end{equation}

From the closed-form expressions of $\Omega$ and $c$ in the
eigenbasis of $D$, per eigendimension $i$, $\omega_i \tilde{x}_i + c_i = (1+\alpha_i)^{-1}(\tilde{z}_i - \tilde{x}_i)$ after substituting the closed forms, yielding the matrix identity
\begin{equation}
\Omega \tilde{x} + c
=
(I + \Lambda)^{-1}(\tilde{z} - \tilde{x}) \, .
\end{equation}

Substituting $E = P H^\top R^{\nicefrac{-1}{2}} V$ and using
$V(I+\Lambda)^{-1}V^\top = (I+D)^{-1}$,
the mean becomes
\begin{align}
\bar{x}_1
&=
\bar{x}
+
P H^\top R^{\nicefrac{-1}{2}}
V (I+\Lambda)^{-1} V^\top
R^{\nicefrac{-1}{2}}(z - H\bar{x})
\nonumber\\
&=
\bar{x}
+
P H^\top (H P H^\top + R)^{-1}
(z - H\bar{x}) \, ,
\end{align}
which is exactly the Kalman posterior mean.

For the covariance, using $\Phi = I + E \Omega F^\top$, and the identities $F^\top P F = \Lambda$ and
$V(I+\Lambda)^{-1}V^\top = (I+D)^{-1}$, algebra in the eigenbasis yields
\begin{equation}
\Phi P \Phi^\top
=
P - E (I+\Lambda)^{-1} E^\top.
\end{equation}
Substituting back the definitions of $E$ and $D$ gives
\begin{align}
\Phi P \Phi^\top
&=
P
-
P H^\top R^{\nicefrac{-1}{2}}
(I+D)^{-1}
R^{\nicefrac{-1}{2}} H P
\nonumber\\
&=
P
-
P H^\top
(H P H^\top + R)^{-1}
H P \, ,
\end{align}
which matches the Kalman posterior covariance.
\end{proof}

\section{$N$-step piecewise method for nonlinear systems}

While the exact analytic derivations assume a linear measurement matrix $H$ and constant noise covariance $R$, practical tracking problems often involve highly nonlinear observation models $h(x)$. 

To apply the analytic flow to nonlinear models, we partition the pseudo-time interval into $N$ discrete subintervals $0 = \lambda_0 < \lambda_1 < \dots < \lambda_N = 1$~\cite{toro2023analytic}, yielding the $N$-step analytic EDH (NAEDH) algorithm. At the start of each interval $[\lambda_{k-1}, \lambda_k]$, we linearize the observation model around the current particle-cloud mean to obtain a local Jacobian $H_k$.

Treating $H_k$ as constant over the subinterval, we compute the local eigendecomposition $D_k = R^{\nicefrac{-1}{2}} H_k P H_k^\top R^{\nicefrac{-1}{2}} = V_k \Lambda_k V_k^\top$, yielding local projection matrices $E_k$ and $F_k^\top$. Because state transition matrices compose multiplicatively ($\Phi(\lambda_k, 0) = \Phi(\lambda_k, \lambda_{k-1})\Phi(\lambda_{k-1}, 0)$) across intervals, the piecewise homogeneous state transition matrix must be evaluated by exponentiating the integral of the exact local flow matrix directly over $[\lambda_{k-1}, \lambda_k]$. This yields
\begin{equation} \label{eq:n_step_transition}
    \Phi_k(\lambda_k, \lambda_{k-1}) = I + E_k \Omega_k(\lambda_k, \lambda_{k-1}) F_k^\top.
\end{equation}

To simplify the notation of the resulting closed-form evaluations, let us define the eigenvalue-scaling factors at the boundaries of the $k$-th interval as $s_{i,k} = \sqrt{1 + \lambda_k \alpha_{i,k}}$ and $s_{i,k-1} = \sqrt{1 + \lambda_{k-1} \alpha_{i,k}}$. The relative scaling for each eigenvalue $\alpha_{i,k}$ evaluates to the ratio
\begin{equation}\label{eq:omega_step}
    \Omega_{ii,k}(\lambda_k, \lambda_{k-1}) = \frac{ s_{i,k-1} / s_{i,k} - 1 }{\alpha_{i,k}} \, .
\end{equation}

Similarly, to find the local incremental forcing term, we evaluate the definite integral of the mapped inhomogeneous vector over the specific interval bounds $[\lambda_{k-1}, \lambda_k]$. The exact closed-form antiderivative for the $i$-th eigendimension evaluated over the $k$-th subinterval is
\begin{equation} \label{eq:ci_step}
    c_{i,k} = \frac{ \alpha_{i,k} \tilde{z}_{i,k} ( \lambda_k s_{i,k-1} - \lambda_{k-1} s_{i,k} ) + \tilde{x}_{i,k} ( s_{i,k-1} - s_{i,k} ) }{\alpha_{i,k} \, s_{i,k}^2 \, s_{i,k-1}}.
\end{equation}

By defining the local forcing vector $c_k = [c_{1,k}, \dots, c_{n_z,k}]^\top$, the exact particle update across the $k$-th subinterval is computed as
\begin{equation} \label{eq:n_step_update}
    x(\lambda_k) = \Phi_k(\lambda_k, \lambda_{k-1}) x(\lambda_{k-1}) + E_k c_k \, .
\end{equation}
A uniform partition $\lambda_k = k/N$ is wasteful when $D$ is ill-conditioned: along its dominant eigenvalue $\alpha_{\max}$, the bulk of the per-substep contraction $s_{i,k-1}/s_{i,k}$ is absorbed in the first interval, and the remaining substeps refine an already converged state. Requiring instead that this contraction be constant across $k$ along $\alpha_{\max}$ yields the geometric grid
\begin{equation} \label{eq:ccr_schedule}
    \lambda_k = \frac{(1+\alpha_{\max})^{k/N} - 1}{\alpha_{\max}} \, , \qquad k = 0, \dots, N \, .
\end{equation}
We refer to \eqref{eq:ccr_schedule} as the constant contraction rate (ccr) schedule. Although $D_k$ changes across substeps as $H_k$ is re-linearized, we precompute the schedule once from the initial $D$ at the prior mean, on the working assumption that this carries the dominant stiffness.
Our proposed NAEDH algorithm with the ccr schedule is summarized in Alg.~\ref{alg:naedh}. 

Concentrating substeps in the resulting high-contraction region lets NAEDH with ccr outperform adaptive Euler at the same and even substantially larger substep budgets (cf.~Fig.~\ref{fig:three-panel}b).

\begin{algorithm}[t]
\footnotesize
\caption{NAEDH-ccr update.}
\label{alg:naedh}
\begin{algorithmic}[1]
\State \textbf{Input:} $\bar{x}, P, \{x_0^{(j)}\}_{j=1}^{N_p}, z, h(\cdot), R, N$
\State $H\!\gets\!\nabla h(\bar{x})$;\; $D \gets R^{\nicefrac{-1}{2}} H P H^\top R^{\nicefrac{-1}{2}}$
\State Eigendecompose $D = V \Lambda V^\top$, $\Lambda = \mathrm{diag}(\alpha_1, \dots, \alpha_{n_z})$
\State ccr schedule $\{\lambda_k\}_{k=0}^N$ via \eqref{eq:ccr_schedule}
\For{$k = 1$ to $N$}
    \State \textbf{if} $k>1$: re-linearize $H, D, V, \Lambda$ at mean of $\{x_{k-1}^{(j)}\}_{j=1}^{N_p}$
    \State $E \gets P H^\top R^{\nicefrac{-1}{2}} V$;\; $F^\top \gets V^\top R^{\nicefrac{-1}{2}} H$
    \State $\tilde{z} \gets V^\top R^{\nicefrac{-1}{2}} z$;\; $\tilde{x} \gets F^\top \bar{x}$
    \State Form diagonal $\Omega_k$ via \eqref{eq:omega_step} and vector $c_k$ via \eqref{eq:ci_step} over $[\lambda_{k-1}, \lambda_k]$
    \State $x_k^{(j)} \gets (I + E\Omega_k F^\top)\, x_{k-1}^{(j)} + E c_k$ for $j = 1, \dots, N_p$
\EndFor
\State \textbf{Return:} $\{x_N^{(j)}\}_{j=1}^{N_p}$
\end{algorithmic}
\end{algorithm}
\section{Experiments}

Our experimental setup is adapted from~\cite{dai2022design}, which provides the full description of the planar bearings-only tracking scenario and the 6-D constant-acceleration target dynamics (Fig.~\ref{fig:three-panel}a). We modify three aspects to stress-test the filter. First, we superimpose a sinusoidal $y$-perturbation (amplitude 2\,m, period 6\,s) on the otherwise constant-acceleration ground truth, introducing a deliberate mismatch with the filter's dynamic model. Second, the prior mean is intentionally biased by $(-10, 15)$\,m from truth with a loose $P_0 = \mathrm{diag}(200\, I_2, 10\, I_2, I_2)$; combined with the strongly nonlinear bearing model, this deliberately induces the ODE stiffness and linearization error that this work targets. 
Third, we sweep the bearing noise standard deviation $\sigma_\theta$ over three levels (Table~\ref{tab:rmse}) with each level averaged over 100 Monte Carlo trials of 15 measurements over 15 seconds. The comparison baselines are EDH with $\Delta L$-adaptive Euler integration~\cite{mori2016adaptive}, the stochastic Gromov flow with $\Delta L$-adaptive Euler--Maruyama and optimized diffusion~\cite{dai2022design, daum2018new}, a bootstrap particle filter (PF), and the extended Kalman filter (EKF). The $\Delta L$ thresholds of the adaptive flows are calibrated so that, on average, they take the same number of substeps as the fixed-schedule NAEDH variants, enabling an equal-budget comparison. 

Fig.~\ref{fig:three-panel}b shows the convergence of the position error along the homotopy parameter $\lambda$ for the first update. At the fixed substep budget, NAEDH-ccr reaches a residual below $10^{-2}\,$m, an order of magnitude better than adaptive Euler at the same budget; even when adaptive Euler is granted a substantially larger budget, NAEDH-ccr still leads over most of the trajectory. Fig.~\ref{fig:three-panel}c traces the mean position error across the 15 filter updates: NAEDH-ccr and Gromov collapse the prior bias to $\sim 10^{-3}\,$m within the first measurement. NAEDH-lin (linear schedule) requires roughly seven updates to recover, while the EDH variants lag throughout. By $t = 15\,$s, all flow-based filters have converged to a process-noise-limited regime. Table~\ref{tab:rmse} summarizes RMSE and per-update runtime across the three measurement-noise levels. NAEDH-ccr is the most accurate filter at every noise level, with a per-update cost essentially equal to NAEDH-lin and EDH-adaptive and roughly half that of the stochastic Gromov flow. The bootstrap PF does not reach the flow-based accuracy even at the larger particle count, and the EKF diverges in the highly informative low-noise regime where the bearing measurement is most nonlinear.
  \begin{figure}[t]
    \centering
    \includegraphics[width=\columnwidth]{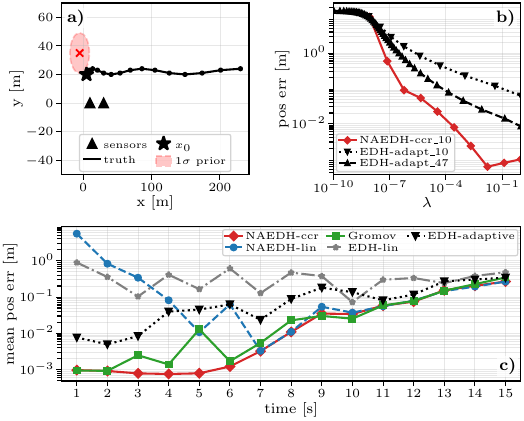}
    \caption{(a) Scenario: two static bearing sensors, biased prior mean with $1\sigma$ ellipse, and ground-truth target trajectory. (b) Position error [m] along the homotopy parameter $\lambda$ for the first update; subscripts denote the substep count $N$. (c) Mean position error [m] across the 15 filter updates for the compared methods averaged over 100 Monte Carlo runs for the $\sigma_\theta=0.001\deg$ setup.}
    \label{fig:three-panel}
  \end{figure}

\begin{table}[t]
\centering
\caption{RMSE (mean $\pm$ std) and per-update runtime at three measurement-noise levels $\sigma_\theta$.}
\label{tab:rmse}
\scriptsize
\setlength{\tabcolsep}{3pt}
\renewcommand{\arraystretch}{1.1}
\begin{tabular}{@{}l c ccc c@{}}
\toprule
 & & \multicolumn{3}{c}{RMSE [m]} & \\
\cmidrule(lr){3-5}
Method & $(N, N_p)$
       & $\sigma_\theta=0.001^\circ$ & $\sigma_\theta=0.05^\circ$ & $\sigma_\theta=1^\circ$
       & ms/update \\
\midrule
NAEDH-lin    & $(10,\,500)$          & $1.5 \pmstd{0.15}$ & $2.9 \pmstd{0.55}$ & $7.7 \pmstd{4.3}$ & $0.63$ \\
NAEDH-ccr    & $(10,\,500)$          & $\mathbf{0.11} \pmstd{0.055}$ & $\mathbf{1.8} \pmstd{0.40}$ & $\mathbf{3.6} \pmstd{1.9}$ & $0.62$ \\
\midrule[0.3pt]
EDH-adapt    & $(10,\,500)$          & $0.17 \pmstd{0.058}$ & $2.1 \pmstd{0.41}$ & $3.8 \pmstd{2.0}$ & $0.65$ \\
Gromov-adapt & $(10,\,500)$          & $0.13 \pmstd{0.065}$ & $2.3 \pmstd{0.98}$ & $3.7 \pmstd{2.0}$ & $1.1$ \\
Bootstrap PF  & $(-,\,10^4)$          & $47 \pmstd{23}$ & $48 \pmstd{23}$ & $22 \pmstd{15}$ & $0.97$ \\
Bootstrap PF  & $(-,\,10^5)$          & $46 \pmstd{21}$ & $35 \pmstd{18}$ & $9.2 \pmstd{6.6}$ & $10$ \\
EKF           & $(-,\,-)$             & diverged & $11 \pmstd{0.94}$ & $19 \pmstd{6.5}$ & $0.015$ \\
\bottomrule
\end{tabular}
\end{table}

\section{Conclusion}

This paper derives a closed-form, integration-free evaluation of the EDH particle flow update for vector linear Gaussian measurements and proves its equivalence to the Kalman posterior. Together with the proposed constant contraction rate schedule, the $N$-step piecewise variant gives a stiffness-robust, deterministic particle update for nonlinear measurement models. On a bearings-only tracking benchmark, NAEDH with the ccr schedule is the most accurate of the compared filters at a per-update cost comparable to existing flow-based methods.
\newpage
\bibliographystyle{IEEEtran}
\bibliography{ref}

\end{document}